\documentclass[onecolumn,notitlepage,superscriptaddress]{revtex4-1}

\usepackage{dcolumn}
\usepackage{bm}

\usepackage{amsmath}
\usepackage{graphicx}
\usepackage{amsbsy,amsfonts}
\usepackage{amsthm}
\usepackage[colorlinks]{hyperref}
\usepackage{hypcap}
\usepackage{enumitem}

\setlist[enumerate]{itemsep=0mm}

\newcommand{\suppress}[1]{}

\def\squareforqed{\hbox{\rlap{$\sqcap$}$\sqcup$}}
\def\qed{\ifmmode\squareforqed\else{\unskip\nobreak\hfil
\penalty50\hskip1em\null\nobreak\hfil\squareforqed
\parfillskip=0pt\finalhyphendemerits=0\endgraf}\fi}

\newtheorem{theorem}{Theorem}

\newtheorem{claim}[theorem]{Claim}

\newcommand{\eq}[1]{\hyperref[eq:#1]{(\ref*{eq:#1})}}
\renewcommand{\sec}[1]{\hyperref[sec:#1]{Section~\ref*{sec:#1}}}
\newcommand{\app}[1]{\hyperref[app:#1]{Appendix~\ref*{app:#1}}}
\newcommand{\fig}[1]{\hyperref[fig:#1]{Figure~\ref*{fig:#1}}}
\newcommand{\thm}[1]{\hyperref[thm:#1]{Theorem~\ref*{thm:#1}}}
\newcommand{\lem}[1]{\hyperref[lem:#1]{Lemma~\ref*{lem:#1}}}
\newcommand{\cor}[1]{\hyperref[cor:#1]{Corollary~\ref*{cor:#1}}}
\newcommand{\defn}[1]{\hyperref[def:#1]{Definition~\ref*{def:#1}}}
\newcommand{\alg}[1]{\hyperref[alg:#1]{Algorithm~\ref*{alg:#1}}}

\usepackage{subcaption}
\DeclareCaptionType{algorithm}
\DeclareCaptionSubType{algorithm}



\usepackage[bold]{hhtensor}
\usepackage{braket}
\usepackage{subcaption}


\usepackage{algpseudocode}

    \newcommand{\inlinecomment}[1]{\Comment {\footnotesize #1} \normalsize}
    


\global\def \arxivmode {}

\ifx \arxivmode \undefined
\else
  
  \newcommand\arxivonly[1]{#1}
  \newcommand\prlonly[1]{}
\fi

\ifx \prlmode \undefined
\else
  
  \newcommand\arxivonly[1]{}
  \newcommand\prlonly[1]{#1}
  \renewcommand{\section}[1]{}
\fi

\begin{document}

\title{Using Quantum Computing to Learn Physics}
\author{Nathan Wiebe}
\affiliation{Quantum Architectures and Computation Group, Microsoft Research, Redmond, WA (USA)}
\begin{abstract}
Since its inception at the beginning of the twentieth century, quantum mechanics has challenged our conceptions of how the universe ought to work; however, the equations of quantum mechanics can be too computationally difficult to solve using existing computers for even modestly large systems.   Here I will show that quantum computers can sometimes be used to address such problems and that quantum computer science can assign formal complexities to learning facts about nature.  Hence, computer science should not only be regarded as an applied science; it is also of central importance to the foundations of science.  
\end{abstract}

\maketitle

\section{Introduction}
I first became aware that there is a deep problem at the heart of physics in a first year chemistry class.  The instructor began his very first lecture by writing an equation on the board:
\begin{equation}
i\hbar \frac{\partial \psi(r,t)}{\partial t} = \left[\frac{-\hbar^2}{2m}\nabla^2 +V(r,t) \right] \psi(r,t),\label{eq:schrodinger}
\end{equation}
and asked if anyone knew what this was.  When no one did, he continued by saying that this was the Schr\"odinger equation, the fundamental equation of quantum mechanics, and that chemisty, biology and everything around us is a consequence of this equation.  He then concluded (perhaps hyperbolically) by saying that the only problem is that no one knows how to solve it in general and for this reason we had to learn chemistry.  For me, this lesson raised an important question: how can we be certain that the Schr\"odinger equation correctly describes the dynamics of these systems if solving it is computationally intractable?  

This is the central question of this article, and in the course of addressing it we will not only challenge prevailing views of experimental quantum physics but also see that computer science is fundamental to the foundations of science.


The essence of the problem is that relatively simple quantum systems, such as chemicals or spin lattices, seem to be performing tasks that are well beyond the capabilities of any conceivable supercomputer.  This may come as a surprise since~\eq{schrodinger} is a linear partial differential equation that can be solved in time polynomial in the dimension of the vector $\psi$, which represents the ``quantum state'' of an ensemble of particles.  The problem is that the dimension of the vector $\psi$ grows exponentially with the number of interacting particles.  This means that even writing down the solution to the Schr\" odinger equation is exponentially expensive, let alone performing the operations needed to solve it.

Richard Feynman was perhaps the first person to clearly articulate that this computational intractability actually poses an opportunity for computer science~\cite{Fey82}. He suspected that these seemingly intractable problems in quantum simulation could only be addressed by using a computer that possesses quantum properties such as superposition and entanglement.  This conjecture motivated the field of quantum computing, which is the study of computers that operate according to quantum mechanical principles.

The fact that quantum computing has profound implications for computer science was most famously demonstrated by Shor's factoring algorithm, which runs in polynomial time~\cite{Sho94} whereas no polynomial time algorithm currently exists for factoring.  This raises the possibility that a fast scalable quantum computer might be able to easily break the cryptography protocols that make secure communication over the internet possible.  Subsequently, Feyman's intuition was shown to be correct by Seth Lloyd~\cite{Llo96}, who showed that quantum computers can provide exponential advantages (over the best known algorithms) for simulating broad classes of quantum systems.  Exponential advantages over existing algorithms have also been shown for certain problems related to random walks~\cite{CCD+03}, matrix inversion~\cite{HHL09,WBL12} and quantum chemistry~\cite{KW+10}, to name a few.

At present, no one has constructed a scalable quantum computer.  The main problem is that although quantum information is very flexible it is also very fragile.  Even looking at a quantum system can irreversibly damage the information that it carries.  As an example, let's imagine that you want to see a stationary electron.  In order to do so, you must hit it with a photon.  The photon must carry a substantial amount of energy (relative to the electron's mass) to measure the position precisely and so the interaction between the photon and electron will impart substantial momentum; thereby causing the electron's momentum to become uncertain.  Such effects are completely negligible for macroscopic objects, like a car, because of they have substantial mass; whereas, these effects are significant in microscopic systems.  This behavior is by no means unique to photons and electrons: the laws of quantum mechanics explicitly forbid learning information about a \emph{general} quantum state without disturbing it.

This extreme hypersensitivity means that quantum devices can lose their quantum properties on a very short time scale (typically a few milliseconds or microseconds depending on the system).  Quantum error correction strategies can correct such problems~\cite{NC10}.  The remaining challenge is that the error rates in quantum operations are still not low enough (with existing scalable architectures) for such error correction schemes to correct more error than they introduce.  Nonetheless, scalable quantum computer designs already operate close to this ``threshold'' regime where quantum error correction can lead to arbitrarily long (and arbitrarily accurate) quantum computation~\cite{Fow09}.

What provides quantum computing with its power?
Quantum computing superficially looks very much like probabilistic computing~\cite{NC10}.  Rather than having definite bit strings, a quantum computer's data is stored as a quantum superposition of different bit strings.  For example, if $\vec{v_0} = (1,0)^t$ and $\vec{v_1}=(0,1)^t$ (where $\cdot^t$ is the vector transpose) then the state $(\vec{v_0} +\vec{v_1})/\sqrt{2}$ is a valid state for a single quantum bit.  This quantum bit or ``qubit'' can be interpretted as being simultaneously 0 and 1 with equal weight.  In this case, if the state of the qubit is measured (in the computational basis) then the user will read $(0,1)^t$ with $50\%$ probability and $(1,0)^t$ with $50\%$ probability.  Quantum computers are not just limited to having equal ``superpositions'' of zero and one.  Arbitrary linear combinations are possible and the probability of measuring the quantum bitstring in the quantum computer to be $k$ (in decimal) is $|\vec{v}_k|^2$ for any unit vector $\vec{v}\in \mathbb{C}^N$.   Therefore, although quantum states are analog (in the sense that they allow arbitrary weighted combinations of bit strings) the measurement outcomes will always be discrete, similar to probabilistic computing.  

The commonalities with probabilistic computation go only so far.  Quantum states are allowed to be complex vectors (i.e. $\vec{v}\in \mathbb{C}^{2^n}$) with the restriction that $\vec{v}\vec{v}^* =1$ (here $\cdot^*$ is the conjugate--transpose operation); furthermore, any attempt to recast the quantum state as a probability density function will result (for almost all ``pure'' quantum states) in a quasi--probability distribution that has negative probability density for certain (not directly observable) outcomes~\cite{VFG+12}.  The state in a quantum computer is therefore fundamentally distinct from that of a probabilistic computer.

In essence, a quantum computer is a device that can approximate arbitrary rotations on an input quantum state vector and measure the result.  Of course giving a quantum computer the ability to perform \emph{any} rotation is unrealistic.  A quantum computer instead is required to approximate these rotations using a discrete set of gates.

The quantum gate set that is conceptually simplest consists of only two gates: the controlled--controlled not (the Toffoli gate), which can perform any reversible Boolean circuit, and an extra quantum gate known as the Hadamard gate $(H)$.  This extra gate promotes a conventional computer to a quantum computer.  The Hadamard gate which is a linear operator that has the following action on the ``0'' and ``1'' quantum bit states:
\begin{align}
H:\begin{bmatrix}1 \\ 0 \end{bmatrix} &\mapsto \begin{bmatrix}1/\sqrt{2} \\ 1/\sqrt{2} \end{bmatrix},\nonumber\\
H:\begin{bmatrix}0 \\ 1 \end{bmatrix} &\mapsto \begin{bmatrix}1/\sqrt{2} \\ -1/\sqrt{2} \end{bmatrix}.
\end{align}
In other words, this gate maps a zero--bit to an equal weight superposition of zero and one and maps a one--bit to a similar vector but with an opposite sign on one of the components.  
The Toffoli gate is a three--qubit linear operator that, at a logical level, acts non--trivially on only two computational basis vectors (i.e. bit string inputs): $110 \mapsto 111$ and $111 \mapsto 110$.  All other inputs, such as $011$, are mapped to themselves by the controlled--controlled not gate.  Using the language of quantum state vectors, the action of the Toffoli gate on the two--dimensional subspace that it acts non--trivially upon is
\begin{align}
{\rm Tof} : \begin{bmatrix}0 \\ 1 \end{bmatrix} \otimes \begin{bmatrix}0 \\ 1 \end{bmatrix} \otimes \begin{bmatrix}1 \\ 0 \end{bmatrix} \mapsto \begin{bmatrix}0 \\ 1 \end{bmatrix} \otimes \begin{bmatrix}0 \\ 1 \end{bmatrix} \otimes \begin{bmatrix}0 \\ 1 \end{bmatrix}\nonumber\\
{\rm Tof} : \begin{bmatrix}0 \\ 1 \end{bmatrix} \otimes \begin{bmatrix}0 \\ 1 \end{bmatrix} \otimes \begin{bmatrix}0 \\ 1 \end{bmatrix} \mapsto \begin{bmatrix}0 \\ 1 \end{bmatrix} \otimes \begin{bmatrix}0 \\ 1 \end{bmatrix} \otimes \begin{bmatrix}1 \\ 0 \end{bmatrix}
\end{align}

This finite set of gates is universal, meaning that any valid transformation of the quantum state vector can be implemented within arbitrarily small error using a finite sequence of these gates~\cite{Aha03}.   These gates are also reversible, meaning that any quantum algorithm that only uses the Hadamard and Toffoli gates and no measurement can be inverted.  I will make use of this invertibility later when we discuss inferring models for quantum dynamical systems.

Measurement is different from the operations described above.  The laws of quantum mechanics require that any attempt to extract information from a generic quantum state will necessarily disturb it.  Measurements in quantum computing reflect this principle by forcing the system to \emph{irreversibly} collapse to a computational basis vector (i.e. it becomes a quantum state that holds an ordinary bit string).  For example, upon measurement
\begin{equation}
\begin{bmatrix}
\sqrt{\frac{2}{3}} \\\\ - \sqrt{\frac{1}{3}}
\end{bmatrix}
{\overrightarrow{\rm measurement}}\begin{cases} [1,0]^t & \text{with $P=2/3$} \\\\ [0,1]^t & \text{with $P=1/3$}  \end{cases},
\end{equation}
where $[1,0]^t$ represents logical $0$ and $[0,1]^t$ is the logical $1$ state.
There clearly is no way to invert this procedure.  In this sense, the act of measurement is just like flipping a coin.  Until you look at the coin, you can assign a prior probability distribution to it either being heads or tails, and upon measurement the distribution similarly ``collapses'' to either a  heads or tails result. 
 Quantum computation thus combines elements of reversible computing with probabilistic computing wherein the inclusion of the Hadamard gate introduces both the ability to have non--positive quantum state vectors and superposition (the ability for quantum bits to be in the state 0 and 1 simultaneously) and thereby promotes the system to a universal quantum computer. 

Although quantum computing is distinct from probabilistic computing, it is unclear whether quantum computation is fundamentally more powerful.  Using the language of complexity theory, it is known that the class of decision problems that a quantum computer can solve efficiently with success probability greater than $2/3$, \textsf{BQP}, obeys $\textsf{BPP} \subseteq \textsf{BQP} \subseteq \textsf{PSPACE}$.  Here \textsf{BPP} is the class of decision problems that can be efficiently solved with success probability greater than $2/3$ using a probabilistic Turing machine, and \textsf{PSPACE} is the class of problems that can be solved using polynomial space  and (possibly)  exponential time using a deterministic Turing machine.  It is \emph{strongly} believed that $\textsf{BPP} \subset \textsf{BQP}$ since exponential separations exist between the number of times that quantum algorithms and the best possible conventional algorithms have to query an oracle to solve certain problems~\cite{NC10,CCD+03}; however, the precise relationship between the complexity classes remains unknown.


The apparent difficulty of simulating large quantum systems creates an interesting dilemma for quantum computing: although true quantum computers are outside of our capabilities at present, it would appear that purpose built analog devices could be used to solve problems that are truly difficult for conventional computers.  Recent experiments by the NIST group have demonstrated that a two--dimensional lattice of ions with over two hundred and seventeen ions with programmable interactions can be created~\cite{BSK+12}.  This device can be thought of as a sort of \emph{analog quantum simulator} in the sense that these interactions can be chosen such that the system of ions approximate the dynamics of certain condensed physics models known as Ising models.  

On the surface, it would seem that this system of ions may be performing calculations that are beyond the reach of any conceivable supercomputer.
If the quantum state for such a system were ``pure'' (which roughly speaking means that it is an entirely quantum mixture) then the state vector would be  in $\mathbb{C}^{2^{217}}$.  If the quantum state vector were expressed as an array of single precision floating point numbers then the resultant array would occupy roughly $1.3\times 10^{61}$ megabytes of memory. It is inconceivable that a conventional computer could even store this vector, let alone solve the corresponding Schr\"odinger equation.   

A conventional computer does provide us with something that the analog simulator does not: the knowledge that we can trust the computer's results.  The analog simulator has virtually no guarantees attached that the dynamical model that physicists believe describes the system is actually correct to a fixed number of digits of accuracy.  Finding a way to use computers to inexpensively certify that such quantum devices function properly not only remains an important challenge facing the development of new quantum technologies but it also poses a fundamental restriction on our abilities to understand the dynamics of large quantum systems, provided that Feynman's conjecture is correct.

\section{Why is Certification A Problem?}
In essence, the central question of this paper is that of whether we can practically test quantum mechanical models for large quantum systems.  Although this is a question about physics, it can only be resolved by using the tools computer science.  In particular, we will see that formal costs can be assigned to model inference and that computational complexity theory can provide profound insights about the limitations of our ability to model nature using computers.  
%
%

The importance of computational complexity to the foundations of physics has only recently been observed. In particular, the apparent emergence of thermodynamics in closed quantum systems is, in some cases, a consequence of the cost of distinguishing the ``true'' quantum mechanical probability distribution from the thermodynamical predictions scaling exponentially with the number of particles in the ensemble~\cite{UWE13,SF12}.  This is especially intriguing since it provides further evidence for the long held conjecture that there is an intimate link between the laws of thermodynamics in physics and computational complexity~\cite{Mer02}.

The problem of modeling, or certifying, quantum systems is essentially the problem of distinguishing distributions.
Although the problem of distinguishing distributions has long been solved, the problem of doing so under particular operational restrictions can actually be surprisingly subtle (even in the absence of quantum effects).  To see this, consider the following problem of deciding whether a device yields samples from a probability distribution $p(x)$ or from another distribution $q(x)$.
The probability of correctly distinguishing which one of two possible discrete probability distributions $p(x)$ and $q(x)$ based on a single sample, using the best possible data processing technique, is given by the variational distance:
\begin{equation}
P_{\rm dist} = \frac{1}{2} \left(1+  \frac{1}{2}\sum_{x} |p(x)-q(x)|\right).\label{eq:vdist}
\end{equation}
Given any non--zero bias in this measurement, the Chernoff bound shows that the success probability can be boosted to nearly $100\%$ by repeating the experiment a logarithmically large number of times.  

For typical quantum systems, $P_{\rm dist}$ is of order one~\cite{UWE13} which means that in principle such systems are easy to distinguish.
In practice, the processing method that efficiently distinguishes $p(x)$ from $q(x)$  may be impractically difficult to find in quantum mechanical problems.  
Thus distinguishing typical quantum systems can still be challenging.

As a motivating example, assume that you have a dice with $10,000$ sides that is promised to either be fair (i.e. $p(x)=1/10,000$) or that the probability distribution describing the dice was itself randomly drawn from another distribution such that $\mathbb{E}_x(p(x))=1/10,000$ and the variance in the probabilities obeys $\mathbb{V}_x(p(x)) = 10^{-8}$.  This problem is substantially different from the base problem of distinguishing two known distributions because $q(x)$ is unknown but information about the distribution that $q(x)$ is drawn from is known.  A number of samples that scales at least as the fourth--root of the dimension are needed to distinguish the distributions with high probability (over samples and $p(x)$)~\cite{UWE13,GKA+13}.  This problem is not necessarily hard: a rough lower bound on the number of samples needed is $10$.

Distinguishing typical probability distributions that arise from quantum mechanics from the uniform distribution is similar to the dice problem except that in large quantum systems ``the dice'' may have $2^{200}$ sides or more.  In the aforementioned example a minimum of roughly ${2^{50}}$ samples will be needed to distinguish a quantum distribution from the uniform distribution with probability greater than $2/3$~\cite{UWE13,GKA+13}.  Although collecting $2^{50}$ samples may already be prohibitively large, the number of samples needed at least doubles for every $4$ qubits that are added to the quantum system.  This means that distinguishing typical quantum probability becomes prohibitively expensive as quantum bits (i.e. particles) are added to the system, despite the fact that $P_{\rm dist}$ is of order one.

In short, quantum dynamics tend to rapidly scramble an initial quantum state vector, causing the predictions of different quantum mechanical models to become practically indistinguishable given limited computational resources.  We will see below that complexity theory provides strong justification for the impracticality of distinguishing quantum distributions in general.
In the following I will assume that the reader is familiar with standard complexity classes $\textsf{P}$, $\textsf{NP}$, the polynomial hierarchy, $\textsf{\#P}$ and so forth.  For brevity, the term efficient will mean ``in polynomial time''.

\section{Boson Sampling}
Scott Aaronson and Alex Arkhipov provided the strongest evidence yet that quantum systems exist that are exponentially difficult to simulate using conventional computers in their seminal 2011 paper on ``Boson Sampling''~\cite{AA11}.  Their work proposes a simple experiment that directly challenges the extended Church--Turing thesis, which conjectures that any physically realistic model of computation can be efficiently simulated using a probabilistic Turing machine.  Their Boson sampler device is not universal and is trivial to simulate using a quantum computer.  The remarkable feature of this device is that it draws samples from a distribution that a conventional computer cannot efficiently draw samples from under reasonable complexity theoretic assumptions.  In particular, an efficient algorithm to sample from an arbitrary Boson sampler's outcome distribution would cause the polynomial hierarchy to collapse to the third level, which is widely conjectured to be impossible.

A Boson sampling experiment involves preparing $n$ photons (which are Bosons) in $n$ different inputs of a network of beam splitters, which are pieces of glass that partially reflect and transmit each incident photon.  An illustration of this device is given in \fig{boson} (a).  The Boson sampler is analogous to the Galton board, which is a device that can be used to approximately sample from the binomial distribution.  The main difference between them is that the negativity of the quantum state (manifested as interference between the photons) causes the resultant distribution to be much harder to compute than the binomial distribution, as we will discuss shortly.  At the end of the protocol, detectors placed in each of $m$ different possible output ``modes'' count the number of photons that exit in each mode.

The sampling problem that needs to be solved in order to simulate the action of a Boson sampler is as follows.  Let $A\in \mathbb{C}^{m,n}$ be the transition matrix mapping the input state to the output state for the Boson sampling device.  Let us define $S=(s_1,\ldots,s_m)$ to be the experimental outcome observed, i.e. a list of the number of photons observed in mode $1,\ldots,m$.  Then let $A_S\in \mathbb{C}^{n\times n}$ be the matrix that has $s_1$ copies of the first row of $A$, $s_2$ copies of the second and so forth.  Then the probability distribution over $S$ given $A$ is~\cite{AA11}
\begin{equation}
\Pr(S|A)= \frac{|\rm{Per}(A_S)|^2}{s_1!\cdots s_m!},\label{eq:sampprob}
\end{equation}
where ${\rm Per}(\cdot)$ is the permanent of a matrix, which is like a determinant with the exception that positive cofactors are always used in the expansion by minors step.  It is important to note, however, that a quantum computer is not known to be able to efficiently learn these probabilities and in turn the permanent.  

Permanent approximation is known to be $\textsf{\#P}$  complete~\cite{Val79,AA11}, where $\textsf{\#P}$ is the class of problems associated with counting the number of satisfying assignments to problems whose solutions can be verified efficiently. Such problems can be extremely challenging (much harder than factoring in the worst cases), which strongly suggests that the distribution for these experiments will be hard to find for certain cases if $n\gg 100$.  The work of Aaronson and Arkhipov use the hardness of permanent approximation (albeit indirectly) to show that even drawing a sample from this distribution can be computationally hard under reasonable complexity theoretic assumptions.  In contrast, drawing a sample from the quantum device is easy (modulo the engineering challenges involved in preparing and detecting single photons~\cite{BFR+13}) since it involves only measuring the number of photons that leave the Boson sampling device in each of the $m$ possible modes.
This provides a compelling reason to believe that there are simple physical processes that are too difficult to simulate using any conceivable conventional computer.

Although the Boson sampling distribution can be hard to compute,  it can be easily to distinguished from the uniform distribution because information is known about the asymptotic form of the distribution is known~\cite{AA13}.  Despite this, there are other distributions that are easy to sample from but are hard to distinguish from typical Boson sampling distributions~\cite{AA13}.  Thus certifying Boson samplers can still be a difficult.  A more general paradigm is needed to approach certifying, or more generally learning, an underlying dynamical model.  Bayesian inference provides a near ideal framework to solve such problems.
\begin{figure*}[t!]
\includegraphics[width=\textwidth]{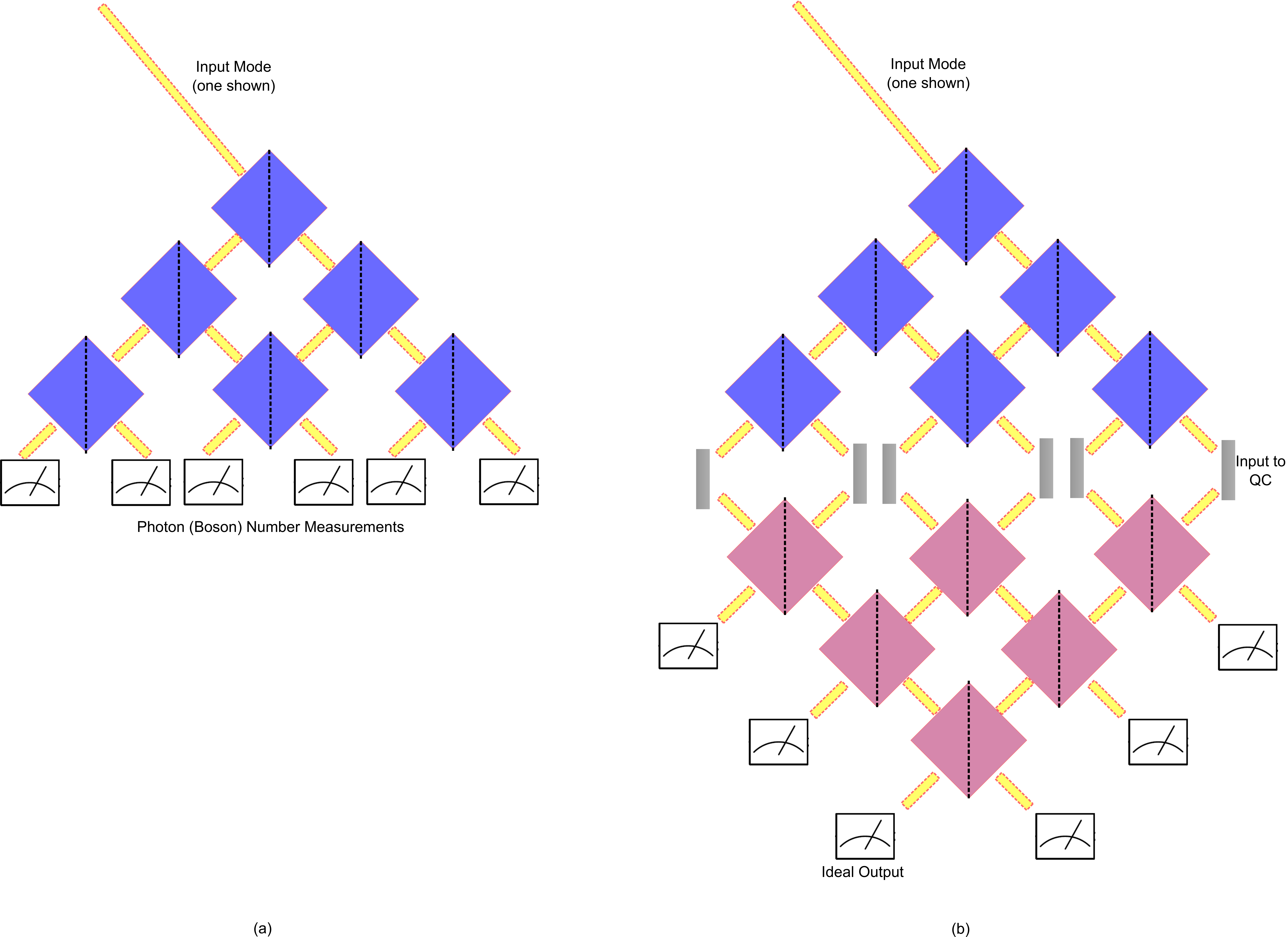}
\caption{(a) Schematic representation of boson sampling experiment where blue squares are beam splitters.  (b) Quantum inference experiment where pink squares represent a quantum computer simulation of the inverse of a hypothetical transformation carried out by (a).  If the guess is correct (or nearly so) all the output bosons will be measured in the same mode that they were input with high probability.}\label{fig:boson}
\end{figure*}

\section{Bayesian Inference}
Bayes' theorem provides a convenient way to find the probability that a particular model for a system is true given an observed datum $D$ and a set of prior assumptions about the model $H$.  The prior assumptions about the model can be encoded as a probability distribution $\Pr(H)$ that can be interpreted as a probability that you subjectively assign to represent your belief that the true model is $H$.  Constraints on the classes of models allowed can also be naturally included in this framework by changing the support of the prior distribution (i.e. disallowed models are given zero a priori probability).  The probability that model $H$ is correct, given the prior distribution and $D$ is

\begin{equation}
\Pr(H|D) =\frac{\Pr(D|H)\Pr(H)}{\Pr(D)},\label{eq:bayes}
\end{equation}
where $\Pr(D)$ is a normalizing constant.  The resultant distribution is known as the posterior distribution, which then becomes the prior distribution for the next experiment.  This procedure is called ``updating'' and $\Pr(D|H)$ is called the likelihood function.
  
There is a strong connection in Bayesian inference between learning and simulation.  This link is made explicit through the likelihood function, $\Pr(D|H)$, whose computation can be thought of as a simulation of $H$.  It further suggests that if the likelihood function cannot be computed efficiently then updating your beliefs about the correct model for the system, given an observation of the system,  is also not efficient. This can make Bayesian inference intractable.

Let's return to the problem of testing models for quantum systems.  In this context, Bayesian inference can be extremely difficult to perform using conventional computers because the likelihood function is evaluated by simulating a quantum system, which we have argued can be exponentially expensive modulo reasonable complexity theoretic assumptions.  This raises an interesting philosophical question: does computational complexity place fundamental limitations on our ability to understand systems in nature?
Although this may be difficult to show in general (owing to the plethora of ways that a quantum system can be probed and the data can be analyzed), it is possible to show that this conjecture holds in certain limiting cases:

\begin{claim}
There exist finite-dimensional models $H_a$ and $H_b$ for a quantum mechanical system and fixed observables $\mathcal{M}$ such that models $H_a$ and $H_b$ cannot be efficiently distinguished using Bayesian inference based on outcomes of $\mathcal{M}$ without causing the polynomial hierarchy to collapse and yet quantum computing can be used to efficiently distinguish these models using  Bayesian inference and the same $\mathcal{M}$.
\end{claim}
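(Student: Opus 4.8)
The plan is to realize $H_a$ and $H_b$ as two distinct linear-optical (Boson sampling) interferometers, with full $m\times m$ transformations $U_a$ and $U_b$ whose relevant submatrices are the transition matrices $A_a$ and $A_b$, and to take $\mathcal{M}$ to be the photon-number measurement in the $m$ output modes. For a fixed $n$-photon input the likelihood of an outcome $S$ under a model $H$ is exactly the Boson sampling probability \eq{sampprob}, so any Bayesian update based on data from $\mathcal{M}$ must evaluate ratios of squared permanents $|{\rm Per}((A_a)_S)|^2$ and $|{\rm Per}((A_b)_S)|^2$. I would choose $A_a$ and $A_b$ generically (e.g.\ as suitably scaled Haar-random submatrices) so that the associated permanents lie in the regime where approximation is hard in the sense of \cite{AA11}, while arranging that the two output distributions differ enough that the variational distance in \eq{vdist}, and hence $P_{\rm dist}$, is bounded away from $1/2$. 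This last point guarantees that the two models are \emph{information}-theoretically distinguishable from a few samples, so that the obstruction I am after is purely computational.

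For the classical lower bound I would argue that an efficient classical Bayesian distinguisher based on $\mathcal{M}$ would, in particular, yield an efficient estimate of the likelihood ratio to the precision needed to separate the models, and therefore an efficient estimate of the relevant permanents. Since approximating the permanent is $\textsf{\#P}$-complete \cite{Val79,AA11}, such an estimator would place a $\textsf{\#P}$-hard quantity inside the polynomial hierarchy and, through the Stockmeyer/Toda-style counting argument of \cite{AA11}, collapse it to its third level. This establishes the first half of the claim.

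For the quantum upper bound I would exploit the invertibility of quantum circuits noted earlier. A Boson sampling transformation is a product of reversible beam-splitter operations, so a quantum computer can efficiently simulate both it and its inverse \cite{Llo96}. As in \fig{boson}(b), after the physical device applies the unknown transformation I would use the quantum computer to apply the inverse $U_a^{-1}$ of the hypothesized model and only then perform the \emph{same} measurement $\mathcal{M}$; inserting a known unitary before the detectors leaves the observable fixed. If the true model is $H_a$ the composite single-particle transformation $U_a^{-1}U_a$ is the identity, every photon returns to its input mode, and the likelihood of the ``all-photons-returned'' outcome is exactly $1$ — a value that requires no permanent to evaluate. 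If instead the true model is $H_b$, the return probability is $|{\rm Per}((U_a^{-1}U_b)_{\rm in})|^2$, which by construction is bounded below $1$ by a constant. The Bayesian likelihood ratio for this single distinguished outcome is thus trivial to compute and bounded away from $1$, and by the Chernoff argument already invoked for \eq{vdist} a logarithmic number of repetitions drives the posterior onto the correct model with success probability approaching $1$.

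The hard part is making the two requirements compatible: the permanents underlying the \emph{classical} likelihood must be genuinely intractable, yet the \emph{quantum} experiment must sidestep them entirely. The resolution is exactly this inverse trick, which conjugates $\mathcal{M}$ by the known circuit $U_a^{-1}$ and converts the hard-to-compute likelihood into the deterministic outcome of the identity channel whenever the guessed model is correct. The remaining technical point I would treat carefully is to verify that one can simultaneously (i) keep $A_a,A_b$ in the regime where permanent approximation is provably hard, and (ii) guarantee that the wrong-model return probability $|{\rm Per}((U_a^{-1}U_b)_{\rm in})|^2$ is bounded away from $1$ by a \emph{known} constant, so that the distinguishing signal is certifiable without ever evaluating a hard permanent.
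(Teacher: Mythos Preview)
Your proposal is correct and follows essentially the same argument as the paper: instantiate $H_a,H_b$ as Boson sampling interferometers with random (Gaussian/Haar) matrices, take $\mathcal{M}$ to be photon-number detection, invoke the \textsf{\#P}-hardness of permanent approximation for the classical obstruction to Bayesian updating, and use the inverse trick $\vec{v}\mapsto U_a^{-1}U\vec{v}$ followed by the same $\mathcal{M}$ for the efficient quantum side. The paper handles your technical point (ii) by citing \cite{AA13,GKA+13} for the near-orthogonality of $U_a^{-1}U_b\vec{v}$ to $\vec{v}$ with high probability over random $A_a,A_b$, which is exactly the certifiable separation you identify as the remaining thing to check.
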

The proof of the first half of this claim is straight forward.  Let us consider $H_a$ and $H_b$ to be Boson sampling experiments with different matrices $A_a$ and $A_b$ which are taken to be Gaussian random matrices.  Let us further take $\mathcal{M}$ to represent the measurement of the number of Bosons in each of the modes.  We saw that the likelihood of a particular outcome $D$ being observed involves calculating the permanents of $(A_a)_S$ and $(A_b)_S$~\cite{AA11}.  Efficient approximation of the permanent causes the polynomial hierarchy to collapse~\cite{AA11} and hence, modulo reasonable complexity theoretic assumptions, Bayesian inference cannot be used to distinguish between these models efficiently using $\mathcal{M}$.

A quantum computer cannot be expected to inexpensively compute the likelihood function directly because that entails computing a permanent: a task for which neither a fast quantum or conventional algorithm is known.  So how is it that a quantum computer can provide an advantage?  The answer is that a quantum computer can be used to change the problem to one that is easily decidable.  The key freedom that a quantum computer provides is the ability to perform basis transformations that, in effect, allow $\mathcal{M}$ to be transformed into a different measurement operator whose outcomes provide clear evidence for some models over others.    The presence of a quantum computer (that is coupled to the experimental system) therefore gives us the ability to solve the problem of model distinction in a completely different way.

Boson samplers can be efficiently simulated by a quantum computer.  Indeed, a Boson sampler is simply a linear optical quantum computer without the ability to adaptively control quantum gates based on prior measurements (which is needed for universality).  This means that a universal quantum computer trivially has all of the computational power of a Boson sampler~\cite{AA11}.  Furthermore, closed system quantum dynamics are always be invertable (unless the system is measured) as we discussed in the context of quantum computing above.  This means that we can scramble the initial state $\vec{v}$ by running it through the Boson sampler and attempt to approximately unscramble it using a quantum computer that simulates the inverse of the map that the Boson sampler $H_a$ would perform.  This inversion experiment is illustrated in \fig{boson} (b).  The forward evolution under the experimental system and approximate inversion under the quantum computer of the initial quantum state vector is
\begin{equation}
\vec{v} \mapsto U_a^{-1} U \vec{v},\label{eq:6}
\end{equation}
where $U_a$ is the quantum transformation generated by $H_a$, $U_b$ is the quantum transformation generated by $H_b$ and $U$ is an unknown transformation promised to be either $U_a$ or $U_b$.  If, for example, $U=U_a$ then this protocol performs:
\begin{equation}
\vec{v} \mapsto U_a^{-1} U_a \vec{v} = \vec{v},\label{eq:7}
\end{equation}
meaning that we will always observe as output the same state that was used as the input.  In contrast, if $U=U_b$ then this process leads to
\begin{equation}
\vec{v} \mapsto U_a^{-1} U_b \vec{v} \ne \vec{v}.\label{eq:8}
\end{equation}
In fact, with high probability over the matrices $A_a$ and $A_b$ this transformation will lead to a vector that is nearly orthogonal to $\vec{v}$ if $N$ is large~\cite{AA13,GKA+13}, i.e. $\vec{v}^* U_a^{-1} U_{b}\vec{v} \approx 0$, where $*$ is the conjugate transpose operation.  In Boson sampling experiments, $\vec{v}$ is taken to be a Boson number eigenstate meaning that a measurement of the number of Bosons in each mode will yield the same value every time.  This means that we can easily use the same $\mathcal{M}$  to determine whether $U_a^{-1}U$ maps $\vec{v}$ back to itself, which should happen with probability $1$ if $U=U_a$ and with low probability if $U=U_b$.  Thus the decision problem can be solved efficiently using Bayesian inference in concert with quantum computing, which justifies the above claim. 

Quantum computation can therefore be used to convert seemingly intractable problems in modeling quantum systems into easily decidable ones.  Also, by using a quantum computer in place of an experimental device we can assign a computational complexity to performing experiments.  This allows us to characterize facts about nature as ``easy'' or ``hard'' to learn based on how computationally expensive it would be to do so using an experimental system coupled to a quantum computer; furthermore such insights are valuable \emph{even in absentia of a scalable quantum computer}.

\section{A New Way to Approach Physics}
A quantum computer is more than just a computational device: it also is a universal toolbox that can emulate any other experimental system permitted by quantum theory.  Put simply, if a quantum computer were to be constructed that accepts input from external physical systems then experimental physics would become computer science.  When seen in this light, computational complexity becomes vital to the foundations of physics because it gives insights into the limitations of our ability to model, and in turn understand, physical systems that are no less profound than those yielded by the laws of thermodynamics.  

It is assumed that the reader is familiar with quantum computing in the following.  In particular,~\thm{likelihood} and~\alg{qhl} are provided for the benefit of experts in quantum computation.  The details of these results can, nonetheless, be safely ignored by the reader.

This approach to learning physics using quantum computers is simply an elaboration on the strategy used above in~\eq{6} to~\eq{8}.
As an example of this paradigm (see~\cite{WGFC13a,WGFC13b}) consider the following computational problem: assume that you are provided with an experimental quantum system that you can evolve for a specified evolution time $t$.  We denote the transformation that this system enacts by $\mathcal{E}(t):\mathbb{C}^N \mapsto\mathbb{C}^N$ that can be applied to an arbitrary input quantum state for any $t\in \mathbb{R}$.  The computational problem is to estimate the true model for $\mathcal{E}(t)$, denoted $H$, in a (potentially continuous) family of models $\{\mathcal{H}\}$ using a minimal number of experiments and only efficient quantum computations.  

In order to practically learn $H$, we need to have a concise representation of the model.  We assume that each $H$ is parameterized by a vector $\vec{x}$ and explicitly denote a particular model $H$ as $H(\vec{x})$ when necessary.
As a clarifying example, let us return to the case of the Schr\" odinger equation~\eq{schrodinger} with the case where $V(r)=\frac{1}{2}k r^2$:
\begin{equation}
i\hbar \frac{\partial \psi(r,t)}{\partial t} = \left[\frac{-\hbar^2}{2m}\nabla^2 +\frac{1}{2}k r^2 \right] \psi(r,t)\label{eq:schrodinger2}.
\end{equation}
This problem, which corresponds to a quantum mechanical mass--spring system (harmonic oscillator), it may be that neither the mass of the particle $m$ nor the spring constant $k$ are known accurately.  In this case, the model parameters can be thought of as a vector $\vec{x} = [m,k]$ and the model itself is specified by $H(\vec{x}) = \left[\frac{-\hbar^2}{2m}\nabla^2 +\frac{1}{2}k r^2 \right]$.  Such models are known in quantum mechanics as Hamiltonians and they uniquely specify the quantum dynamical system: $\psi(r,t) = e^{-iH(\vec{x}) t/\hbar} \psi(r,0)$.  For finite--dimensional systems the rule is exactly the same: $\vec{v} \mapsto e^{-iH(\vec{x})t} \vec{v}$ (in units where $\hbar=1$), which means $\mathcal{E}(t)=e^{-iHt}$.

The problem of inferring the model parameters, $\vec{x}$, can be addressed by using Bayesian inference by following steps identical to those discussed in~\eq{6} to~\eq{8} in the context of Boson sampling.  The following theorem illustrates that the  likelihood functions that can be efficiently computed using a quantum computer for many quantum systems.

\begin{theorem}\label{thm:likelihood}
Let $\mathcal{E}(t)=e^{-i Ht}$ where $H\in \mathcal{H}$ is the model ``Hamiltonian'' and we assume that
\begin{enumerate}
\item For every $X\in\mathcal{H}$, $X\in \mathbb{C}^{N\times N}$ is a $d$--sparse Hermitian matrix whose non--zero entries  in each row can be efficiently located and computed.
\item There exists $\Lambda\in\mathbb{R}$ such that every $X\in\mathcal{H}$, $|X_{i,j}|\le \Lambda$ for all $i,j$.
\item $t$ is an arbitrary real valued evolution time for the quantum system that is specified by the user.
\end{enumerate}
then for any such hypothetical model, the likelihood of obtaining a measurement outcome $D$, $\Pr(D|H)$, under the action of $\left(e^{-iH_-t}\right)^{-1} \mathcal{E}(t)=e^{iH_-t}\mathcal{E}(t)$, where $H_-\in \mathcal{H}$, can be computed with high probability within error $\epsilon$ using a number of accesses to the entries of the matrices $H$ amd $H_-$ that, at most, scales as
$$
\frac{(\log^{*}(N))^{1+o(1)}d^{3+o(1)}(\Lambda t)^{1+o(1)}}{\epsilon^{1+o(1)}},
$$
and a number of elementary operations that scales polynomially in $\log(N)$ given that the initial state can be efficiently prepared using a quantum computer.
\end{theorem}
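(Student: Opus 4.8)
The plan is to recognize the likelihood as the outcome probability of the inversion experiment of~\eq{6}--\eq{8}, and then to reduce its estimation to coherent sparse Hamiltonian simulation followed by amplitude estimation. Concretely, writing $U \defeq e^{iH_-t}e^{-iHt}$ and letting $M_D$ be the projector (or POVM element) associated with the fixed outcome $D$ of $\mathcal{M}$, the likelihood is $\Pr(D|H) = \bra{\psi_0} U^\dagger M_D U \ket{\psi_0}$ for the efficiently preparable initial state $\ket{\psi_0}$. The whole task then splits into (i) implementing $U$ to high precision as a unitary circuit, and (ii) estimating the expectation $\bra{\psi_0}U^\dagger M_D U\ket{\psi_0}$ to additive error $\epsilon$.

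First I would implement $U$. Since $U$ is the composition $e^{iH_-t}\cdot e^{-iHt}$, and $e^{iH_-t}$ is merely the simulation of $H_-$ for time $-t$, it suffices to simulate each of the two $d$--sparse Hamiltonians $H$ and $H_-$ for time $|t|$ and concatenate the circuits, which only doubles the cost. Each Hamiltonian meets the hypotheses needed for sparse-matrix simulation: by Assumption~1 the nonzero entries of each row are efficiently locatable and computable (furnishing the sparse-access oracle on $O(\log N)$ qubits), and by Assumption~2 the entries are bounded by $\Lambda$, so $\norm{X}\le d\Lambda$. I would decompose each Hamiltonian into a sum of $1$--sparse Hermitian terms via an efficient deterministic edge-coloring of the sparsity graph; this coloring subroutine is what contributes the $(\log^{*}N)^{1+o(1)}$ factor, while the number of color classes together with the simulation of their sum accounts for the polynomial-in-$d$ factors. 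I would then simulate the resulting sum using the linear-combination-of-unitaries / truncated-Taylor-series technique, whose query complexity is near-linear in the rescaled time $\tau\sim d\Lambda t$ and only polylogarithmic in $1/\epsilon$; the polylogarithmic precision factors are absorbed into the $o(1)$ exponents, yielding the $d^{3+o(1)}(\Lambda t)^{1+o(1)}$ dependence.

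For step (ii) I would avoid the naive $1/\epsilon^{2}$ cost of repeated direct sampling and instead use amplitude estimation, which estimates $\bra{\psi_0}U^\dagger M_D U\ket{\psi_0}$ to additive error $\epsilon$ using $O(1/\epsilon)$ coherent calls to $U$, $U^\dagger$, the reflection about $\ket{\psi_0}$ (available from the efficient state preparation), and the reflection associated with $M_D$; this is the origin of the $1/\epsilon^{1+o(1)}$ factor. To keep the total error within $\epsilon$, the per-call simulation error must be taken of order $\epsilon^{2}$, so that after the $O(1/\epsilon)$ Grover-type iterations the accumulated simulation error is still $O(\epsilon)$; since the simulation depends only polylogarithmically on its precision, this rescaling again only perturbs the $o(1)$ terms. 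Multiplying the per-call query cost by the number of amplitude-estimation iterations gives the stated query bound, and since every primitive (oracle query, $1$--sparse evolution, coloring arithmetic, reflections) acts on $O(\log N)$ qubits, the elementary-gate count is $\poly(\log N)$.

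The main obstacle is attaining near-linear dependence on $\tau = d\Lambda t$ \emph{simultaneously} with near-optimal (polylogarithmic, hence $o(1)$-absorbable) dependence on $1/\epsilon$: a naive Lie--Trotter product formula would give polynomially worse $t$-- and $\epsilon$--scaling and spoil the claimed exponents, so the delicate part is invoking the high-order simulation correctly and verifying that its error composes with the amplitude-estimation error so that the final likelihood is accurate to within $\epsilon$ with high probability.
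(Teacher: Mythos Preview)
Your proposal is correct and follows the same two--step skeleton as the paper: invoke a sparse--Hamiltonian simulation primitive to implement $e^{iH_-t}e^{-iHt}$, then use amplitude estimation rather than direct sampling to get the $1/\epsilon^{1+o(1)}$ dependence.

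The only substantive difference is the simulation primitive you choose. The paper simply cites the Childs--Kothari star--decomposition algorithm~\cite{CK11}, whose query complexity is \emph{exactly} the expression $(\log^{*}N)^{1+o(1)}d^{3+o(1)}(\Lambda t)^{1+o(1)}/\epsilon^{o(1)}$ appearing in the statement; the $(\log^{*}N)$ and $d^{3}$ factors are artifacts of that specific construction, and the paper's error argument is the crude ``simulate to error $\epsilon/2$, estimate to error $\epsilon/2$, errors add.'' You instead assemble your own simulator by combining a $1$--sparse edge--coloring decomposition with the truncated--Taylor/LCU method. That is perfectly valid (and in fact would likely yield a tighter $d$--exponent than $3$, which still satisfies the upper bound), but the particular exponents in the theorem are really a quotation of~\cite{CK11} rather than something you need to rederive. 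Your error bookkeeping---setting the per--call simulation error to $O(\epsilon^{2})$ so that it survives $O(1/\epsilon)$ Grover iterations---is more careful than what the paper actually writes, and is the right way to make the composition rigorous.
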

\begin{proof}
The proof follows by combining two well known quantum algorithms.  First the Childs and Kothari simulation algorithm~\cite{CK11} shows that the action of $e^{-iHt}$ can be simulated within error at most $\epsilon/2$ using 
$$\frac{(\log^*(N))^{1+o(1)} d^{3+o(1)}(\Lambda t)^{1+o(1)}}{\epsilon^{o(1)}},$$ 
accesses to the matrix elements of $H$.  The likelihood $\Pr(D|H)$ can then be estimated to within precision $\epsilon/2$ by repeating this algorithm $O(1/\epsilon^2)$ times and measuring the fraction of times that outcome $D$ is observed.  The overall error is at most $\epsilon$ since these errors are at worst additive~\cite{NC10}.  

A faster quantum algorithm called amplitude estimation exists for such sampling problems~\cite{BHM+00}.  It can be used to estimate the likelihood using a number of queries that scales quadratically better with $\epsilon$ than statistical sampling and is successful at least $81\%$ of the time.  The caveat is that we need to be able to reflect quantum state vectors about the space orthogonal to the marked state and also the initial state, which is efficient under the assumptions made above.  Both algorithms require a number of auxiliary operations that scale polynomially in $\log(N)$ which justifies the above claims of efficiency.
\end{proof}

This shows that the likelihood function can be efficiently approximated within constant error  for a broad class of quantum systems using a quantum computer: quantum chemistry, condensed matter systems and many other systems fall into this class. Now let us assume that $M$ possible hypothetical models are posited to describe $\mathcal{E}(t)$ that each satisfy the properties of~\thm{likelihood}.  Using~\eq{bayes} requires $M$ simulations resulting in an overall cost of
$$
\frac{M(\log^{*}(N))^{1+o(1)}d^{3+o(1)}(\Lambda t)^{1+o(1)}}{\epsilon^{1+o(1)}}.
$$
Here $N$ can be as large as $2^{200}$ for reasonably large quantum systems and $d$ is often on the order of a few hundred.  The cost will therefore likely be modest if $M$ is small, the Hamiltonian matrix is sparse and ($\Lambda t$) is not unreasonably long.  These requirements can be met  in most physically realistic cases.  

These results show efficiency for fixed $\epsilon$, but does $\epsilon$ need to be to prohibitively small to guarantee stability?  If, for example, $D$ is an outcome such that $\Pr(D|H)\in O({\rm poly}(N^{-1}))$ for hypothetical model $H$ then $\epsilon$ will have to be extremely small in order to compute the likelihoods to even a single digit of accuracy.  Cases where every model predicts $P(D|H(\vec{x}))\in \Theta(1/N)$ are therefore anathema to Bayesian inference.  

Fortunately, inversion removes this possibility because it reduces each experiment to two effective outcomes.  If $H_-=H$ then $e^{iH_- t}\mathcal{E}(t) \vec{v}  = \vec{v}$ for all $\vec{v}$, which precludes the possibility of exponentially small likelihoods if $\vec{v}$ is a computational basis vector (or more generally if $\vec{v}$ can be efficiently transformed to a computational basis vector using the quantum computer).  Conversely, it is well known that with high probability over models, $e^{iH_- t}\mathcal{E}(t) \vec{v} \ne \vec{v}$ in the limit as $N\rightarrow \infty$ if $H\ne H_-$ and $t>0$ is taken to be a constant~\cite{Haa10}.  In contrast if $t$ is small then it is trivial to see that almost all models and choices of $H_-$ will result in $e^{iH_-t}\mathcal{E}(t) \vec{v} = \vec{v}$.  Thus:
\begin{enumerate}
\item Each experiment has two outcomes: either $e^{iH_-t}\mathcal{E}(t): \vec{v} \mapsto \vec{v}$ or $e^{iH_-t}\mathcal{E}(t): \vec{v} \mapsto \vec{v}^{\perp}$ where $\vec{v}^{\perp}$ is in the orthogonal complement of $\vec{v}$.
\item The variable $t$ can be chosen by the user and hence can be chosen to ensure that roughly half of the most likely models for the quantum system will approximately yield $\vec{v}$ with high probability over models~\cite{WGFC13a}. 
\end{enumerate}
These properties will typically allow Bayesian inference to identify the correct model using a logarithmic number of experiments, similar to binary search. 
\alg{qhl} provides a concrete method that uses these principles to learn an approximate model for a quantum system.   

\begin{algorithm}[t!]
\caption{\label{alg:qhl} Quantum Hamiltonian learning algorithm.}
\rule{\linewidth}{1pt}
\begin{algorithmic}
\Require Prior probabilities $w_i$, $i \in \{1, \dots, M\}$, hypothetical model specifications $\vec{x}_i$,  $i \in \{1, \dots, M\}$, number of samples used to estimate probabilities $N_{\rm samp}$, total number of updates used $N_{\rm exp}$, state preparation protocol $Q$ for the initial states $\vec{v_0}$, protocol for implementing measurement operator $P$ such that $\vec{v_0}\vec{v_0}^*$ is a POVM element, an error tolerance $\epsilon$ and $k$ the number of votes used to boost the success probability of amplitude estimation.
\Ensure  Hamiltonian parameters $\vec{x}$ such that $H(\vec{x})\approx H(\vec{x}_{\rm true})$.

\vskip0.2em
\hrule
\vskip0.2em

\Function{QHL}{$\{w_i\}$, $\{\vec{x}_i\}$, $N_{\rm samp}$, $N_{\rm exp}$, $P$, $Q$,$\epsilon,k$}
  \For{$i \in 1 \to N_{\exp}$}
	\State $\vec{v_0} \gets Q(i)$. \inlinecomment{Prepare initial state.}
	\State Draw $\vec{x}_-$ and $\vec{x}'$ from $\Pr(\vec x):= w_i/\sum_i w_i $.
	\State $t\gets 1/\|H(x)-H(x')\|$.\inlinecomment{Choose $t$ according to guess heuristic}
	\State $H_-\gets H(\vec{x}_-)$.
	\State $D\gets $ measurement of $e^{iH(\vec{x_-})t}\mathcal{E}(t)\vec{v_0}$ using $P$.\inlinecomment{Perform experiment on untrusted system.}
	\For{$j \in 1 \to M$}\inlinecomment{Compute likelihoods using quantum computer}
	\For{$\kappa \in 1 \to 2k-1$}
		\State  $Y_\kappa \gets$ result of amplitude estimation  \inlinecomment{Estimate $\Pr(D|H(\vec{x_j}))$}
		\State \qquad\qquad on $e^{iH_-t}e^{-iH(x_j) t}\vec{v}_0$, to within error $\epsilon$.
	\EndFor
	\State $p_m \gets {\rm median}(Y)$.	\inlinecomment{Vote on correct likelihood}
	\EndFor
  \State $Z\gets \sum_{m=1,M}w_m p_m$.
  \State $w_i \gets w_i p_i/Z$.\inlinecomment{Perform  update.}
  \EndFor
  \State \Return $\sum_m w_m \vec{x}_m$\inlinecomment{Return Bayes estimate of $\vec{x}_{\rm true}$.}
\EndFunction
\end{algorithmic}
\rule{\linewidth}{1pt}
\end{algorithm}

The question is, how well does \alg{qhl} work in practice?  Since we lack quantum hardware that can implement it directly, we cannot assess  \alg{qhl}'s performance directly; however,  we can estimate how it ought to scale in practice using small numerical experiments.  We will see that it works extremely well for the examples considered.

A good benchmark of the performance of the algorithm is given in~\cite{WGFC13a}, wherein it is shown that this approach can be used to learn inter--atom couplings of frustrated Ising models (which are used condensed matter physics to model magnetic systems).  
The Hamiltonian matrix that describes these systems is
$$
H(\vec{x})= \sum_{i=0}^{n-1} \sum_{j=i+1}^n \vec{x}_{i,j} \begin{bmatrix} 1 & 0 \\ 0 & -1 \end{bmatrix}^{(i)} \otimes \begin{bmatrix} 1 & 0 \\ 0 & -1 \end{bmatrix}^{(j)},
$$
where this notation means that the state is assumed to be a tensor product of $n$ two--dimensional subsystems and each term in $H(\vec{x})$ acts non--trivially only on the $i^{\rm th}$ and $j^{\rm th}$ subsystems.  
This application is slightly more complex than the simple one discussed in~\alg{qhl} because the model is continuous rather than discrete, but by discretising the problem and using a technique called resampling, this process essentially reduces to that in~\alg{qhl}.

\fig{line_nonoise_plots} shows that the number of updates needed to learn the couplings in an Ising model that has $n=3,5,7,9$ qubits with pairwise interactions between every qubit in the system (i.e. the interaction graph is complete) where $\vec{v_0} = [1/\sqrt{2^n}, \ldots,1/\sqrt{2^n}]^t$.  The error decreases exponentially with the number of updates for any fixed $n$.  In other words, there exist $\gamma(n)>0$ such that the error scales as $e^{-\gamma(n) N_{\rm samp}}$. \fig{exp_scale} shows that the decay constants, $\gamma(n)$, for these exponential curves scale as $\Theta(1/{\rm dim}(\vec{x}))$, where ${\rm dim}(\vec{x})$ is the number of parameters in the model.  In general ${\rm dim}(\vec{x})$ need not be a function of $n$ but for the data above, each qubit interacts with all other qubits in the system and hence ${\rm dim}(\vec{x})=\binom{n}{2}=O(n^2)$.  We find that $\vec{x}$ can be learned to within error $\Delta$ using a number of experiments that empirically scales as $\Theta(n^2\log(1/\Delta))$.

The prior results allow us to estimate the complexity of inferring an Ising model $H\in \mathbb{C}^{2^n \times 2^n}$ using a quantum computer as an experimental device.  The algorithm requires a number of queries to $\mathcal{E}(t)$ that scales as $O(n^2\log(1/\Delta))$ and a number of elementary operations that scale as
$$
\frac{Mn^{2+o(1)}}{(\Delta\epsilon)^{1+o(1)}},
$$
where $M$ is the number of models used in the discretization ($M$ is known to scale sub--exponentially with $n$ but experimentally $M\in O({\rm polylog}(n))$ and a constant value of $\epsilon=0.05$ seems to suffice) and $d=1$.

\begin{figure*}[t!]
\begin{minipage}{0.45\linewidth}
\includegraphics[width=\textwidth]{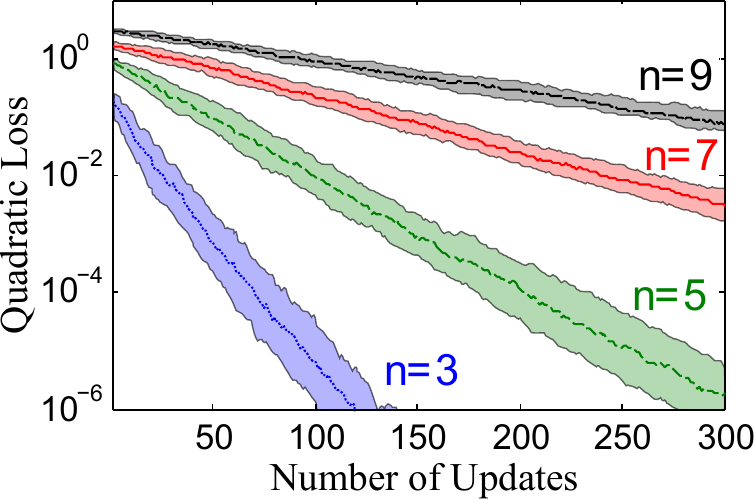}
\caption{Quadratic loss vs the number of experiments needed to learn Ising models.  The shaded regions are a $50\%$ confidence interval and the solid line gives the median.  Figure from~\cite{WGFC13a}.}\label{fig:line_nonoise_plots}
\end{minipage}
\hspace{0.5cm}
\begin{minipage}{0.45\linewidth}
\vspace{-1cm}
\includegraphics[width=\textwidth]{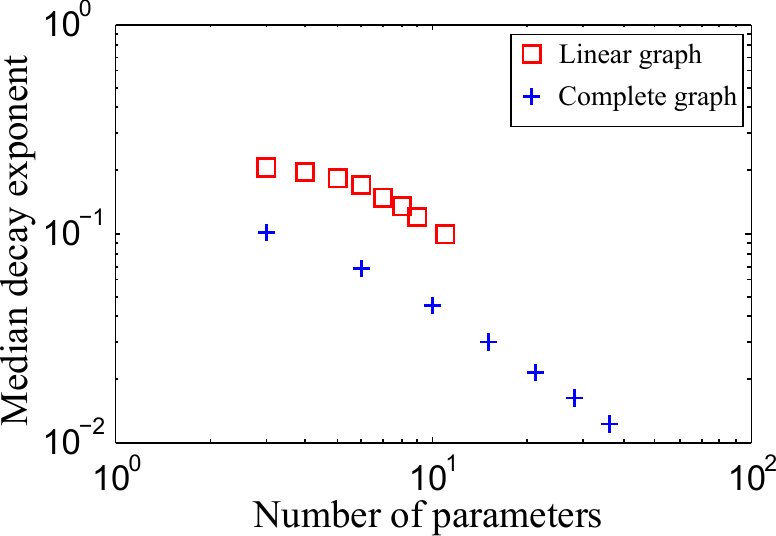}
\caption{The median decay exponent ($\gamma$) for the quadratic loss as a function of the number of parameters in the Ising model on the complete graph as well as the line.  Figure originally from~\cite{WGFC13a}.}\label{fig:exp_scale}
\end{minipage}
\end{figure*}

These scalings are not only show that a formal complexity can be assigned to the experimental problem of learning a physical model for a large quantum system, but also suggest that quantum computing could make the learning problem  tractable even for cases where $n>100$.  The combination of quantum computation and statistical inference may therefore provide our best hope of deeply understanding the physics of massive quantum systems that seem to be too complex to understand directly.

\section{Conclusion}
Returning now to our original question, we see that there is good reason (based on complexity theoretic conjectures) to suspect that solving the equations of quantum dynamics, such as the Schr\" odinger equation, may be intractable even for certain modestly large quantum systems.  This would seem to suggest that some quantum systems  exhibit dynamics that, for all practical purposes, cannot be directly compared to the predictions of quantum theory.   Quantum computation offers the possibility that such problems can be circumvented by indirectly comparing the dynamics of the system with simulations performed on the quantum computer.  This also provides a surprising insight: experimental quantum physics can be recast in the language of quantum computer science, allowing formal time complexities to be assigned to learning facts about nature and reinforcing the importance of computer science to the foundations of science. 

A quantum computer does not, however, allow us to address all of the deep problems facing us in physics:  they are not known to be capable of simulating the standard model in quantum field theory~\cite{JLP12}, which is arguably the most successful physical theory ever tested.  Simulation algorithms are also unknown for string theory or quantum loop gravity, meaning that many of the most important physics questions of this generation remain outside of reach of quantum computing.  Much more work may be needed before physicists and computer scientists can truly claim that quantum computers (or generalizations thereof) can be used to rapidly infer models for all physical systems.

\acknowledgements{
I would like to thank Christopher Granade and Yuri Gurevitch for valuable feedback and Robert Browne for the inspiration behind this work.
}

\bibliographystyle{unsrt}

\end{document}